\newtheorem{thm}{Theorem}
\let\paragraph\subsection
\title{On Symmetries of Finite Geometries}
\author{Oliver Knill}
\date{August 31, 2024}
\address{Department of Mathematics \\ Harvard University \\ Cambridge, MA, 02138 }
\subjclass{}
\keywords{Isospectral deformation, Finite Geometries, Symmetries}
\begin{document}
\maketitle

\begin{abstract}
The isospectral set of the Dirac matrix $D=d+d^*$
consists of orthogonal $Q$ for which $Q^* D Q$ is an equivalent Dirac matrix. 
It can serve as the symmetry of a finite geometry $G$.
The symmetry is a subset of the orthogonal group or unitary 
group and isospectral Lax deformations produce commuting flows 
$d/dt D=[B(g(D)),D]$ on this symmetry space. In this note, we
remark that like in the Toda case, 
$D_t=Q_t^* D_0 Q_t$ with $e^{-t g(D)}=Q_t R_t$ solves the 
Lax system. 
\end{abstract}

\section{Finite Geometries}

\paragraph{}
Let $G$ be a finite set with $n$ elements. 
A {\bf dimension function} $R:G \to \{0,1,\dots,q\}$
defines a partition $G=\bigcup_{k=0}^q G_k$ of $G$. 
A {\bf Dirac matrix} is a symmetric {\bf block-tri-diagonal} 
$n \times n$ matrix $D=d+d^*+m$ such 
that $L=D^2$ and $C=(d+d^*)^2$ are both {\bf block diagonal} with respect 
to the decomposition $l^2(G)=\oplus_k l^2(G_k)$. In particular, $d^2=0$, defining
so {\bf cohomology groups} as the kernel of $L_k$ resp $C_k$ on $l^2(G_k)$. 
Their dimensions are the {\bf Betti numbers} of $G$. 
If $m=0$, then $d: l^2(G_k) \to l^2(G_{k+1})$ defines a standard
{\bf co-chain complex}. A particular case is if $df=\sum_k (-1)^k f(\delta_k)$, 
where $\delta_k:G_{k+1} \to G_k$ are the {\bf face maps} of a {\bf delta set} 
$G=\bigcup_{k=0}^q G_k$. 

\paragraph{}
Let us call the triple $(G,D,R)$ a {\bf finite geometry}. Its {\bf f-vector}
is defined by the cardinalities $f_k=|G_k|$. Its {\bf Betti vector} $b$ is given by
$b_k={\rm dim}({\rm ker}(C_k))$. The {\bf Euler-Poincar\'e formula}
$\sum_k (-1)^k f_k = \sum_k (-1)^k b_k$
expresses that the {\bf Euler characteristic} $\chi(G)=\sum_{x \in G} \omega(x)$ with
$\omega(x)=(-1)^{{\rm dim}(x)}$, the alternating sum of the $f_k$, agrees with
the alternating sum of the $b_k$.
The Euler-Poincar\'e identity is best shown by comparing the super trace of
$e^{-tL}$ at $t=0$ and $t=\infty$ using the {\bf McKean-Singer theorem} 
\cite{McKeanSinger,knillmckeansinger} 
that ${\rm str}(L^k)=0$ for all $k>0$ implying ${\rm str}(e^{-tL})$ 
is independent of $t \in \mathbb{R}$. 

\paragraph{}
A particular case is when $G$ is {\bf  finite abstract simplicial complex}, a finite
set $G$ of non-empty sets $x$ closed under the operation of taking finite non-empty 
subsets. In that case, $D$ and $R(x)=|x|-1$ are canonically defined. This is special
however. For example, if $G$ is the set of intersecting pairs of simplices 
in a simplicial complex, we deal with the chain complex for 
{\bf quadratic cohomology}. An other example in one-dimension are {\bf quivers}, one
dimensional delta sets which produce finite geometries that are not simplicial complexes but
for which there still is a natural Dirac matrix $D$ having the property that $D^2=L_0\oplus L_1$ is
block diagonal with $0$-form Laplacian $L_0$ and $1$-form Laplacian $L_1$. That $L_0,L_1$
are essentially isospectral (which is a special case of McKean-Singer symmetry) has been 
exploited by Anderson and Morley \cite{AndersonMorley1985} 
to estimate the spectral radius of $K_0$ leading to eigenvalue general bounds for all eigenvalues
\cite{Knill2024}. 

\paragraph{}
A bit more general than the case of simplicial complexes is if
$d$ is the exterior derivative, coming from face maps of a
{\bf delta set}. An other example is the cohomology of {\bf Wu characteristic} 
$\chi_2(G)=\sum_{(x,y) \in G} \omega(x) \omega(y)$, leading to topological invariants
but no homotopy invariants. Already for chain complexes of higher characteristics, we 
have difficulty to frame this within the category of delta sets. 
\footnote{Delta sets technically require exactly $n+1$ face maps 
from $G_n \to G_{n-1}$. We prefer with $D$ and forget about face maps.}
The formula $\chi_2(G)=\chi(G)-\chi(\delta G)$ for a manifold with 
boundary $\delta G$ illustrates why quadratic cohomology is not a homotopy invariant.
An other case going beyond the familiar simplicial complex or delta set geometry 
is obtained when the Dirac matrix $D$ is deformed in an isospectral way. We should
also mention the {\bf Witten deformation} \cite{Witten1982,Cycon} 
$d_t=e^{-t f} d e^{tf}$ which is 
not isospectral but preserves cohomology. 
Witten deformation is not part of the symmetry considered here. 

\paragraph{}
Finite geometries contain the {\bf topos of finite sets}. This is the case when $R$ is 
constant $0$ and where the matrix $D$ is the zero matrix. 
An other example is the {\bf topos of finite quivers}, 
which is the case if $R$ takes values in $\{0,1\}$ and $V=G_0$ or $E=G_1$ are the set of 
{\bf vertices} or {\bf edges} including loops. 
Quite general already is the {\bf topos of delta sets} and in particular, 
the {\bf topos of simplicial sets}. The later are delta sets with additional degeneracy maps. 
Delta sets are important because they allow the formation of products, quotients and level sets
within the category. The most general of these topoi is the topos of delta sets but 
finite geometries go even beyond that, as we do not need the matrix $D$ 
to come from {\bf face maps}. 
Looking at {\bf data structure} $(G,D,R)$ rather than insisting it 
to emerge as a pre-sheaf, is a computer science approach and can be implemented fast.

\paragraph{}
A rich source of examples for finite geometries are given by {\bf finite simple graphs} 
equipped with a {\bf Whitney complex} $G$ (given by the set of complete subgraphs). 
A finite simple graph containing a sub-graph $K_{q+1}$ but no 
$K_{q+2}$ so defines a finite $q$-dimensional geometry with dimension function 
$R(x)=|x|-1$, taking values from $0$ to $q$. Simplicial complexes are too narrow because 
there is no associative Cartesian product within simplicial complexes satisfying the 
Kuenneth formula and preserving discrete manifolds. 
The Stanley-Reisner product satisfies the later two properties 
but it is not associative as multiplication with $1$ products the Barycentric refinement. 
Finite geometries defined by finite graphs are almost as general as the abstract frame work. 
We can always for a given finite delta set look at $G$ as the vertex set of a graph and then
connect two points $x,y \in G$ if there is a sequence of face maps getting from 
$x$ to $y$ or from $y$ to $x$. 

\paragraph{}
A finite geometry $G$ can be used to define a {\bf spectral triple} $(H,D,A)$, where
$H=l^2(G)$ is a Hilbert space and $A$ is a not necessarily commutative sub Banach
algebra of $\mathcal{B}(H)$ like $l^{\infty}(G)$, the set of functions on $G$ with supremum norm.
The {\bf pseudo metric} $d(x,y)=\sup_{f, |[D,f]|_{\infty}=1} |f(x)-f(y)|$ \cite{Connes} 
defines then a {\bf metric space} on the {\bf Kolmogorov quotient}, where equivalence classes 
are points with 0 pseudo distance. This picture shows somehow plays the role of
a Riemannian metric in the continuum. The simple observation that distance alone determines
the metric tensor $g$ in a Riemannian manifold is known much longer and probably first done by
Schr\"odinger \cite{SchroedingerSpaceTimeStructure}. Spectral triples allow to 
work with discrete or non-commutative settings or both.

\paragraph{}
If the Dirac matrix $D$ comes from face maps like in a delta set, there is 
a {\bf partial order} $x \leq y$, if there is a combination of face maps getting 
from $y$ to $x$. In a simplicial complex $G$, where elements are sets of sets, 
this {\bf poset structure} is already given by inclusion $x \subset y$. 
A partial order defines then a {\bf Alexandrov topology} \cite{Alexandroff1937}
$\mathcal{O}$ on $G$: the basis for the topology 
is the set of {\bf stars} $U(x) = \{ y \in G, x \subset y\}$ which are the smallest open sets
containing $x$. In positive dimensions, this topology is never Hausdorff. 
There is {\bf arithmetic} on the entire category of finite geometries.
Start with the monoid formed by disjoint union, extend it to a ring
where the addition is first Grothendieck completed to a group, then use
Cartesian product with $R((x,y))=x+y$ and $D(G+H)=D(G) \oplus D(H)$ and 
$D*H = D(G) \otimes D(H)$, where the tensor products for $D*H$ and $H*D$ are identified.

\section{The symmetry space} 

\paragraph{}
{\bf Definition}: The {\bf symmetry space} of a geometry $(G,D,R)$ is defined as 
the set of orthogonal matrices $Q \in SO(n)$ such that $D'=Q^* D Q$ is again a Dirac 
matrix producing a finite geometry $(G,D',R)$ which is chain-homotop.
In particular, the deformed Laplacian $L'=D'^2$ has the same block diagonal structure 
$L' = \oplus_{k=0}^q L_k'$ than $L=\oplus_{k=0}^q L_k$ with 
$L_k: l^2(G_k) \to l^2(G_k)$ where $G_k=R^{-1}(k)$ and the dimensions of the kernels
of $L_k$ and $L_k'$ agree. The orthogonal $Q$ does not need to preserve $k$-forms
$l^2(G_k)$. The deformed $D'=d'_t + (d'_t)^*$ still has $d_t^2=(d_t^*)^2=0$ but 
$d_t$ now map $l^2(G_k)$ to $l^2(G_k) \times l^2(G_{k+1})$. 

\paragraph{}
The assumption implies that different geometries in the same symmetry space
have the same cohomology and Euler characteristic. 
The terminology "symmetry space" and not "symmetry group" is chosen because
this is a priori not a group, at least not with the induced group structure from $SO(n)$. 
\footnote{There is a non-trivial group structure on it provided by the isospectral flows.
It is a non-compact group of scattering paths converging to geometries that
are block diagonal. The Liouville-Arnold theory \cite{Arnold1980} suggests that that it has 
connected components that are of cylinder type $\mathbb{T}^r \times \mathbb{R}^s$,
with $r$ being non-zero if one looks a evolutions allowing complex matrices. }
To see that $S$ is not a subgroup of $O(n)$, note that 
already $(Q^2)^2 D Q^2$ is not a Dirac matrix in general because 
$Q^2$ is not necessarily block tri-diagonal. 

\paragraph{}
The {\bf isospectral Lax deformation} $D'=[B,D]$, where $B=g(D)^+-g(D)^-$ produces an isospectral
deformation $D_t=Q^*_t D Q_t$ with deformed exterior derivatives $d_t=Q^*_t d_0 Q_t$.
But unlike $d_0$ which maps $k$-forms to $(k+1)$-forms, the image of the map $d_t$ consists
also of $k$-forms. We have $d_t=c_t+n_t$, where $d_t$ does map $k$-forms to $(k+1)$-forms,
leading to $D_t= d_t + d_t^* = c_t +c_t^* + m_t$. If we look at a geometry, we only use
measurements with electromagnetic waves and only use $c_t$, not involving $m_t$. Since
we don't see it, we have called it "dark matter part" of the geometry. 
For the simplest flow $g(D)=D$, we have 
$L'=(D D)' = D' D + D D' = (B D - D B) D + D ( B D - D B) = B L- LB=[B,L]$. 
If $B=d-d^*$ and $B^2=L$ the matrices $L$ and $B$ commute so that $L'=[B,L]=0$ and
$L$ does not change.  However, already for $B=g(D)=D^3$, the Laplacian 
will be deformed. In the case when $g(D)$ is invertible, i.e. if $g(D)=e^{h(D)}$ 
we will see that the deformation is explicit in terms of $QR$ deformation. 

\paragraph{}
The isospectral deformation symmetry is a continuous symmetry, unlike the
{\bf inner symmetries} $Q(D)=D(T)$, that come from a discrete set of automorphisms $T$ 
of $G$ and which induce symmetries. The inner automorphism symmetry is in general trivial. 
\footnote{In the periodic Toda case, which has a $Z_n$ symmetry on $C_n$,
we had observed an explicit deformation within the isospectral
set from $D$ to $D(T)$, the translated operator \cite{Kni93a,Kni93b,Kni95}.
In the real case, because $SO(n)$ does not 
contain reflections, we can have disconnected isospectral parts. 
We would especially need connections from $D$ to $D(T)$. } 
\footnote{Inner symmetries could be defined more broadly and not 
require $T$ to be an automorphism
but that $T$ is {\bf continuous map} from a Barycentric refinement to $G$
and that there is an inverse of $S$ given again by a continuous map and that $TS$ and $ST$ 
are homotop to the identity. }
Certainly, if $G$ is a simplicial complex and $T$ is an automorphism, a bijection which is
permuting elements in $G$ then $(G,D(T),R)$ is again a finite geometry in 
the same symmetry than $(G,D,R)$. 
Can we connect $D$ with $D(T)$ using isospectral flows provided the corresponding orthogonal
matrix has determinant $1$? We would need to find a function $g$
such that $e^{g(D)} =Q R$ with $Qf(x) = f(Tx)$. In the Toda case, we used time dependent
Hamiltonian systems to achieve that. 

\paragraph{}
In \cite{IsospectralDirac,IsospectralDirac2} we saw that
when starting with $D=d+d^*$, the differential equation $D'=[B,D]$ with $D_0=d-d^*$ 
produces deformed operators $c_t+c_t^* + m_t$, containing now a {\bf block diagonal part} $m_t$. 
In the actually unrelated periodic Toda case, we saw that isospectral deformations allow
to interpolate the translation symmetry on the cyclic graph. There was an explicit time
dependent isospectral deformation in the isospectral set within B\"acklund transformations.
The cochain complex defined by the deformed exterior derivative $d_t$
is homotopy equivalent to the complex defined by $d$, similarly as the Witten deformation 
\cite{Witten1982,Cycon} $d_t = e^{-tf} d e^{tf}$. 

\begin{figure}[!htpb]
\scalebox{0.5}{\includegraphics{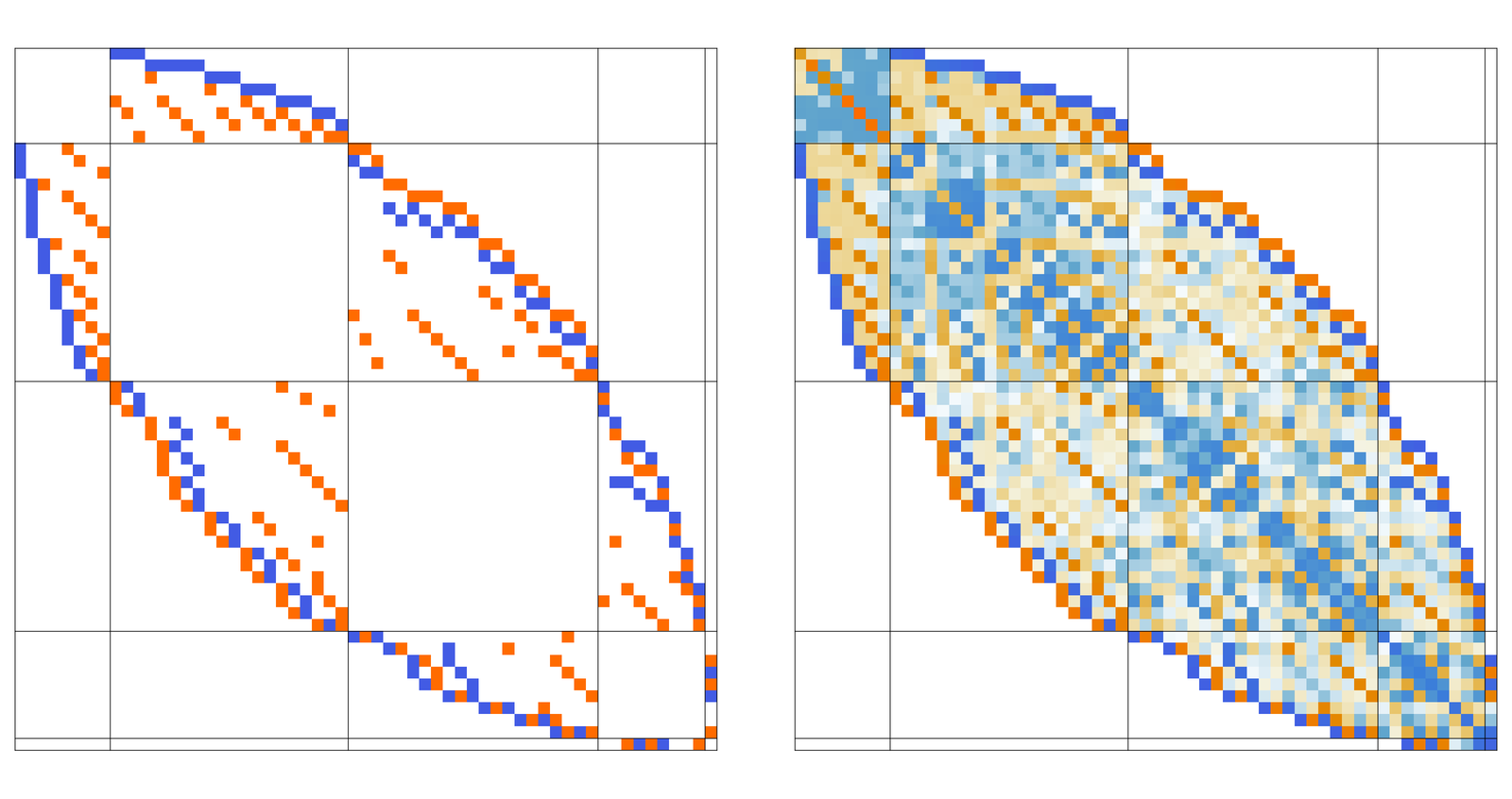}}
\label{Dirac}
\caption{
We see the Dirac matrix $D=d+d^*$ of a randomly chosen simplicial complex $G$
and the Dirac matrix $D_t=d_t+d_t^* = c_t+c_t^* + m_t$ of a deformed complex. 
Unlike the initial $D=D_0$, which has only off diagonal entries, the deformed matrix 
$D_t$ has a block diagonal part. If we focus in the new geometry of what we ``can see"
with exterior derivatives $c_t$ mapping $k$-forms to $(k+1)$-forms, 
we consider the still equivalent geometry 
$(G,C_t=c_t+c_t^*,R)$. It now represents an expanded space if we measure with the Connes
formula. 
}
\end{figure}

\paragraph{}
The isospectral deformation of the Dirac operator is motivated by the {\bf Toda chain},
\cite{Toda1967,Toda}, which is
an isospectral deformation of Jacobi matrices $L$. It is a {\bf Lax pair} \cite{Lax1968}
$L'=[B,L]$ with $B=a-a^*$ and tri-diagonal $L=a+a^*+b$. If $L$ is invertible, it is possible 
to write $L\oplus L_1=D^2$, where $D$ is defined on a doubled lattice and $L_1$ is a 
B\"acklund transformation of $L$. 
\footnote{Most integrable continuous time systems are equivalent to Lax pairs.
An example is the oscillator $L'=[B,L]$ with $L$
a reflection-dilation matrix and $B$ a rotation by $\pi/2$ \cite{Flaschka1974}. 
Also the free top $L'=[B,L]$ for angular momentum $L$.}
However, the deformation of the Dirac matrix $D$ coming from a finite geometry 
is different from Toda. It does not generalize Toda and applies
for general geometries \cite{IsospectralDirac,IsospectralDirac2}. 
It is defined here even for all finite geometries, including the topos of delta sets.
Toda on the other hand is just the deformation of a Schr\"odinger operator $L$ 
on a one-dimensional circular geometry. Deformations of higher dimensional Laplacians
are in general not possible. There is spectral rigidity in general in the sense that
Laplacians, isospectral to a given operator, in general form a discrete set. 

\paragraph{}
Let us elaborate a bit more why the geometry deformation of the Dirac matrix is different from 
Toda, even so also in the Toda case we could write a Jacobi matrix as $L=D^2$.  
Given a Dirac $D$ matrix from an arbitrary finite geometry $G$ and a polynomial $f$, then 
$f(D)$ for any polynomial is again of the same tri-diagonal type. 
This obviously is not the case for Jacobi matrices, where we have a Schroedinger 
operator on the graph $C_n$ (periodic Toda lattice) or the linear path graph $L_n$ (a
scattering situation). 
In the Dirac deformation, we have {\bf block tri-diagonal} matrices defined
by exterior derivatives on a general geometry while isospectral deformation of higher
dimensional Laplacians are in general {\bf not possible} within differential operators. 
The Lax pair for Dirac matrices is also defined for any continuum geometry, like a Riemannian 
manifold but the deformed operators are pseudo differential operators. We consider
here only the finite case. 

\paragraph{}
The next observation parallels the observation known since a long time for the Toda case 
\cite{DeiftLiTomei,Symes}: the Toda lattice flow is related to a QR flow. 
A function $g$ defines from $e^{-t g(L)}=QR$ 
a deformation $Q_t^*D Q_t=D_t$. This produces a solution of the Lax equation 
$D'=[g(D)^+-g(D)^-,D]$. Any Lax deformation of a Dirac operator is interpolated by a QR flow.

\begin{thm}
If $\exp(-t g(D_0))=Q_t R_t$, then $D_t=Q^*_t D Q_t$ solves $D' = [B,D]$,
with $B=g(D)^+-g(D)^-$. 
\end{thm}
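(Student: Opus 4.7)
The plan is to adapt the classical Symes--Deift--Li--Tomei QR-flow argument to the present setting. The key step is to differentiate the defining factorization $e^{-tg(D_0)} = Q_t R_t$ and identify the antisymmetric matrix generating the conjugation $D_t = Q_t^* D_0 Q_t$ as precisely $B$. I would begin by differentiating both sides in $t$, left-multiplying by $Q_t^*$, and right-multiplying by $R_t^{-1}$. Because $g$ is a polynomial and $Q_t$ is orthogonal, $Q_t^* g(D_0) Q_t = g(Q_t^* D_0 Q_t) = g(D_t)$, which yields the master identity
\[
-g(D_t) = A_t + U_t, \qquad A_t := Q_t^* \dot Q_t, \quad U_t := \dot R_t R_t^{-1}.
\]
Differentiating $Q_t^* Q_t = I$ shows $A_t$ is antisymmetric, while $U_t$ is upper triangular because $R_t$, $\dot R_t$, and $R_t^{-1}$ all are.

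Next I would invoke uniqueness of the decomposition of a symmetric matrix as antisymmetric plus upper triangular. Matching strict-lower-triangular entries in $-g(D_t) = A_t + U_t$ forces $A_t$ there to coincide with the strict-lower part of $-g(D_t)$; antisymmetry of $A_t$ combined with symmetry of $g(D_t)$ then determines its strict-upper part. The outcome is $A_t = g(D_t)^{+} - g(D_t)^{-} = B$, up to the convention fixing which of $\pm$ denotes upper and which lower. Differentiating $D_t = Q_t^* D_0 Q_t$ using $\dot Q_t = Q_t A_t$ and $\dot Q_t^* = -A_t Q_t^*$ then yields
\[
\dot D_t = -A_t D_t + D_t A_t = [D_t, A_t] = [D_t, B],
\]
which is the desired Lax equation.

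The main obstacle I anticipate is purely bookkeeping of sign conventions: whether $g(D)^{\pm}$ denotes the strict upper or strict lower part, together with the sign in the exponent $e^{-tg(D_0)}$, determines whether the final commutator reads $[B,D]$ or $[D,B]$. Once these are fixed consistently, the algebra is routine. A minor analytical point is that the QR factorization of $e^{-tg(D_0)}$ must exist and be smooth in $t$; this is automatic because the exponential is always invertible, and a normalization such as positive diagonal entries of $R_t$ makes the factorization unique and real-analytic. The substance of the proof reduces to the elementary Lie-algebra splitting $\mathfrak{gl}_n = \mathfrak{so}_n \oplus \mathfrak{b}_n$ (antisymmetric plus upper triangular) applied to the symmetric matrix $g(D_t)$, which is the Iwasawa-type argument underlying all QR flows. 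The fact that $D_t$ stays in the symmetry space of $(G,D,R)$ is automatic, since orthogonal conjugation preserves the spectrum and hence the cohomology.
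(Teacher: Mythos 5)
Your proof is correct and takes essentially the same route as the paper — both are the standard Symes/Deift--Li--Tomei QR-flow argument — but you run it in the more economical direction. The paper's proof posits the coupled ODEs $R'=[2g(D_t)^+ + g(D_t)^0]R$, $Q'=Q[g(D_t)^- - g(D_t)^+]$, verifies by differentiation that $Q_tR_t$ satisfies $A'=g(D_0)A$ and hence coincides with the exponential, and then invokes uniqueness of ODE solutions to identify the $Q_t$ in the QR factorization with the $Q_t$ of the posited flow. You instead differentiate the factorization $e^{-tg(D_0)}=Q_tR_t$ directly, reduce to the master identity $-g(D_t)=Q_t^*\dot Q_t + \dot R_tR_t^{-1}$, and appeal to uniqueness of the Lie-algebra splitting $\mathfrak{gl}_n=\mathfrak{so}_n\oplus\mathfrak b_n$ applied to the symmetric matrix $g(D_t)$. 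This yields the ODE for $Q_t$ in one stroke rather than positing it and verifying consistency, and it makes the algebraic content (the Iwasawa-type decomposition) transparent. Your anticipated ``sign bookkeeping'' issue is genuine and does resolve in your favor: the paper's code (the \texttt{LowerT} helper, which despite the name extracts the strict \emph{upper} part) shows that $g(D)^+$ denotes the strict \emph{lower} and $g(D)^-$ the strict \emph{upper} part, which is the opposite of your (standard) convention; with that swap your $[D_t,B]$ is exactly the paper's $[B,D]$. One small remark: your closing sentence that $D_t$ stays in the symmetry space because ``orthogonal conjugation preserves the spectrum and hence the cohomology'' is a bit quick — remaining a Dirac matrix requires preserving the block-tridiagonal structure and $d_t^2=0$, which is a separate observation (treated in \cite{IsospectralDirac,IsospectralDirac2}), not a consequence of isospectrality alone — but the theorem as stated only asks that $D_t$ satisfy the Lax equation, so this does not affect the proof.
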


\begin{proof} 
We verify the equivalent statement (switch $t$ to $-t$) 
that $D'=[g(D)^--g(D)^+,D]$ is realized with 
$D_t=Q^* D Q_t$, where $A_t=e^{t g(D_0)}= Q_t R_t$.
Given a path $D_t$ defined by $D_t=Q^*_t D Q_t$ with $A_t=e^{t g(D_0)}=Q_tR_t$. 
We need to show that the $Q_t$ from this decomposition satisfies the differential 
equation Then $Q'=Q (g(D_t)^--g(D_t)^+$. \\
If $Q_t$ is given by the QR-decomposition, Then $Q_t' = Q_t B_t$
with anti-symmetric $B_t$. We need to verify that $B_t$ is of the form $g(D)^--g(D)^+$.
The uniqueness of solutions of differential equations then assures that the 
QR flow and the Toda flow give the same orbits. \\
Since $A_t$ is block-triangular also $Q_t$ is block triangular as one can see when 
doing the Gram-Schmidt process. The key is to look at the differential equations
$$   R'_t=[2g(D_t)^+ + g(D_t)^0] R_t,   Q'_t=Q_t [g(D_t)^--g(D_t)^+] $$
for the pair $R,Q$ with $R(0)=Q(0)=1$ coming from $e^{-0 g(D)}=1 \cdot 1$.
The $R_t$ remains upper triangular. 
Now differentiate $A_t=Q_t,R_t$: 
$$   g(D_0) A = A'=Q'R+QR' = Q_t [g(D_t)^--g(D_t)^+] R_t + Q_t [2g(D_t)^+ + g(D_t)^0 ] R_t
                       = Q_t g(D_t) R_t \; . $$
Because $e^{t g(D_0)}$ is invertible for all $t$, also $R_t$ is invertible
for all $t$ and the equation $g(D_0) Q_tR_t = Q_t g(D_t) R_t$ is 
equivalent to $g(D_0) Q_t = Q_t g(D_t)$.
Also $Q_t^* D_0 Q_t = D_t$ implies $Q_t^* g(D_0) Q_t = g(D_t)$. \\
Having established that $Q'_t = Q B_t$, we see that 
$D_t' = -B_t Q_t + Q_t B_t$ matching $E_t'=[g(D_t)^--g(D_t)^+,D_t]$ 
Because this ordinary differential equation in the
matrix algebra has global solutions $B_t = g(D_t)^--g(D_t)^+$. 
\footnote{There are global bounds on the norm of 
$B_t$ and $Q_t$ which by the way do not hold in infinite dimensions \cite{DeiftLiTomei}.}
The statement in the theorem is the version with $t$ replaced with $-t$. 
\end{proof} 

\paragraph{}
All these flows commute. The verification is the same as in the Toda case. 
See \cite{Symes}. And $e^{-tg(D)}$ is independent of $t$ if and only if $g(D)=0$. This means
that we have no equilibria for all$t$. We have to be careful however because the flow 
extends in the limit $t \to \infty$ to situations $D=c+c^* + m = m$,
where $c=0$ which are not geometric any more and which we do not consider to be
equivalent. 

\paragraph{}
The McKean Singer formula still holds also in the deformed case. The 
super trace any power $L_t^k$ of $L_t=D_t^2$ is still zero implying
${\rm str}(e^{-L_t}) = \chi(G)$. 

\paragraph{}
The deformation can happen in over the complex field $\mathbb{C}$. To get 
complex solutions, look at 
$$ D' = [ g(D)^+- \overline{g(D)}^-+ i \beta g(D)^0, D]   $$
Now that at $t=0$ and $g(D)=D$, we have $g(D)^0=0$ but still 
get an evolution. One can still associate with this with some sort of complex
 $QR$ decomposition by modifying the differential equations for $R,Q$
$$   R'_t=    [2g(D_t)^+ + g(D_t)^0 - i \beta g(D)^0] R_t ,   
     Q'_t=Q_t [\overline{g(D_t)}^--g(D_t)^+ + i \beta g(D)^0   ] $$
Now, $Q(t)$ is unitary and $R(t)$ block upper triangular complex. 
In the limit $t \to \infty$, we reach a Schr\"odinger 
wave evolution $Q_t' = i m_t \beta Q_t$. This complex
evolution can also be pushed to quaternions by replacing $i$ with a
unit quaternion. 

\paragraph{}
With $g(D)=\log(1+c D)$, which is defined for small $c$, 
we get $(1+c D)^t = Q_t R_t$ which produces
a discrete time evolution when restricting $t$ to the integers. Now,
take $t=m$, an integer then  $(1+c D)^m$ is a polynomial and if $x$
is a simplex, look at $(1+c D)^m = Q_m R_m$.
It produces a deformation $D_m = Q_m^* D_0 Q_m$ which has the property that properties
of the geometry in distance larger than $m$ to $x$ do not influence the motion 
of $D$ at $x$.  We can make $c$ time dependent 
for fixed $t$, and could look at the polynomial $(1-t D/k)^k \to e^{-t D}$. 
We see that we can approximate the deformation path with an orbit of a
{\bf discrete time system} that has the {\bf local property} simplices $y$
of $G$ in distance larger than some distance $L$ are not affecting
the change of entries like $D_{x,x}$ 

\section{Discussion}

\paragraph{}
Lets start with a more historical or philosophical remark. 
Geometers at the time of Euclid looked at symmetries like 
{\bf similarities} or {\bf congruences} of triangles or circles. Klein's Erlanger program
saw symmetries in the form of {\bf symmetry groups} defining the class of geometry 
under considertation. N\"other related symmetries with {\bf conservation laws}.
The special covariance principle in the form of Lorentz or Poincar\'e groups
guided special relativity. Representations of the Poincar\'e group are "particles". 
The general covariance principle related to the diffeomorphism group of a manifold 
is central to general relativity. Arnold \cite{Arnold1980} noticed, motivated from the fact
that the free motion of a mass point is a geodesic in the rotation group, that
fluid flows like the Euler equations are {\bf geodesics in the diffeomorphism group}.
Such principles suggest the following postulate:
{\bf a geometry is allowed to deform freely along geodesics in its symmetry group.}
In the geometric frame work considered here, where $D_t$ converge to block diagonal 
case where $c=0$ and so $m = d+d^*$, meaning that each $d$ preserves the set of $k$-forms. 
A consequence is that "finite geometries in general spontaneously expand when 
distances are measured using the electromagnetic exterior derivatives $c_t$.

\paragraph{}
That a geometry can float in its symmetry space is usually not
spectacular. The reason is that physical properties of the geometry stay the same. 
If our space-time manifold would undergo a diffeomorphism change, then by the 
{\bf equivalence principle}, all physical properties remain the same
despite that the change of coordinates produces forces but these forces are a 
consequence of the equivalence principle. A rigid body released in space, 
if free from any external forces, will in general rotate like a free top. This is 
described by a Lax pair $L'=[\omega,L]$. It is integrable and produces
geodesics in $SO(3)$. The probability that it 
does not rotate or move with respect to a reference frame is zero. One could
think to go into a coordinate system where centrifugal and Coriolis forces 
are minimal, but such considerations are non-relativistic.
The deformed Dirac matrix $D_t=d_t+d_t^* + m_t$ is 
a {\bf change of coordinates} which relates it to $D_0 = d+d^*$. It is completely 
equivalent to the original $D_0$ as it is given by a linear change of coordinates.
But the restriction of the Dirac matrix to its (electro-magnetic) exterior part
makes the geometry to expand we we measure with electro-magnetic tools (mathematically 
given by exterior derivatives $c_t$ that map $k$-forms to $(k+1)$-forms.
The association of electro-magnetism is not that far-fetched because all distance 
measurements we are aware of eventually boil down to electro-magnetic forces and so 
the wave equation $u_{tt} = - L u$ for the Laplacian $L=(c_t+c_t^*)^2$ of the geometry. 

\paragraph{}
From a mathematical perspective, if we look at the motion of the exterior derivative
$c_t$ with $C_t=c_t+c_t^*$ alone and look the Connes formula for 
$C_t$ rather than $D_t=c_t+c_t^* + m_t$, the
geometric effect is visible, as the $c_t$ shrink in general and space expands.
The QR flow in general converges to a block diagonal matrix, where in the limiting 
case (which we never reach in finite time), 
we have no geometry any more. The $Q$ which diagonalizes $D$ is still 
in the symmetry group but the exterior derivative $c$ is zero then. This is
a limiting situation which the QR flow in general tends towards to. We expect a finite
set of points in the symmetry group to consists of matrices $D=m$ without exterior 
derivative. We can such points still as geometry since the Betti numbers defined Hodge 
theoretically as the kernels of $L_k$ are still the same and because we still have 
wave dynamics. But any Lax flow $D'=[g(D)^+-g(D)^-,D]=0$ does not move.
We have seen above that if we allow the geometry to move in the {\bf complex symmetry}
then in the limit we still move and have a wave dynamics. 

\paragraph{}
Exterior derivatives are associated to classical physics like
the Maxwell equations $dF=0, d^*F=j$ which give the electro-magnetic
field $F \in l^2(G_2)$ from the charge-current $j \in l^2(G_1)$. 
As the diagonal part $m$ in the Dirac matrix $D=c_t + c_t^* +m$ is not
electro-magnetic, it associates with dark matter speculations.
If we subscribe to the covariance principle that time evolution is just
a choice of geodesic flow in the symmetry group of a finite geometry
and assume that at time $t=0$, we ahve $m_0=0$.
During the evolution, the non-classical part $m_t$ 
grows in general in time and the $c_t$ which is used to define
waves like the wave operator $u(s) = \cos((c_t+c_t^*) s) u(0)$ solving
the wave equation $u_{ss} = -(c_t+c_t^*)^2 u$. Note that the solution $u(t)$
simultaneously looks at the deformation on $l^2(G)$ and especially on each
form sector like {\bf $1$-forms} $l^2(G_1)$ which can represent {\bf electromagnetic potentials} 
$A(s) \in l^2(G_1)$, defining electromagnetic fields $F(s)=dA(s) \in l^2(G_2)$ 
represented as {\bf 2-forms}.

\paragraph{}
To summarize: we have seen that if a geometry given initially as $D=d+d^*$ is allowed to 
float freely in its symmetry, it in general expands. If we look at
the electro-magnetic $c+c^*$ part of $D_t=c_t+c_t^*+m$, then
what that we focus on what we can ``see" using light. 
\footnote{Light is based on the Maxwell equations $dF=0,d^*F=j$ assume that $F=dA$ 
is a 2-form and $j$ is a 1-form. In the deformed $D_t=d_t+d_t^*$
the $d_t A$ is a mixture of 1-forms and 2-forms. }
With $c_t+c_t^*$ and neglecting $m_t$, we get a traditional geometry, where 
the exterior derivative maps $k$-forms to $(k+1)$-forms. But distances
given by Connes formula have now expanded. Also interesting is that we can find
discrete time evolutions in the symmetry which are local, meaning that signals
propagate with {\bf finite speed}. This is not the case for the wave equation 
$u_{tt} = -L u$ (understanding having continuous time and discrete space).
The solution $u_t = \sum_k a_k \cos(\sqrt{\lambda_k} t) \psi_k$ 
for the initial $u_0=\sum_k a_k \psi_k$ decomposed in an eigen-system $\psi_k$
uses eigen-functions $\psi_k$ that depend on the entire geometry. The value $u_t(x)$ can depend 
on $u_0(y)$ with $y$ located arbitrarily far away. It is only in the continuum limit that we
have strict finite propagation speed of the wave equation. We have once modified the wave
equation for discrete operators so that we have finite propagation speed \cite{Kni98}.

\section{Code}

\paragraph{}
The following Mathematica code allows to compute both the QR flow as well
as the Lax deformation for an arbitrary simplicial complex $G$ and compares 
how close we are. 
\footnote{Mathematica 14.1.0 produces sometimes different sign in the last diagonal entry 
of $R$ in the QR decomposition. We corrected this in August 27, 2024 and programmed 
the QR decomposition by hand.}

\begin{tiny}
\lstset{language=Mathematica} \lstset{frameround=fttt}
\begin{lstlisting}[frame=single]
Generate[A_]:=If[A=={},{},Sort[Delete[Union[Sort[Flatten[Map[Subsets,A],1]]],1]]];
Whitney[s_]:=Union[Sort[Map[Sort,Generate[FindClique[s,Infinity,All]]]]]; L=Length;
sig[x_]:=Signature[x];   nu[A_]:=If[A=={},0,L[A]-MatrixRank[A]];  omega[x_]:=(-1)^(L[x]-1);
F[G_]:=Module[{l=Map[L,G]},If[G=={},{},Table[Sum[If[l[[j]]==k,1,0],{j,L[l]}],{k,Max[l]}]]];
sig[x_,y_]:=If[SubsetQ[x,y]&&(L[x]==L[y]+1),sig[Prepend[y,Complement[x,y][[1]]]]*sig[x],0];
Dirac[G_]:=Module[{f=F[G],b,d,n=L[G]},b=Prepend[Table[Sum[f[[l]],{l,k}],{k,L[f]}],0];
   d=Table[sig[G[[i]],G[[j]]],{i,n},{j,n}];{d+Transpose[d],b}];
Hodge[G_]:=Module[{Q,b,H},  {Q,b}=Dirac[G];  H=Q.Q;
   Table[Table[H[[b[[k]]+i,b[[k]]+j]],{i,b[[k+1]]-b[[k]]},{j,b[[k+1]]-b[[k]]}],{k,L[b]-1}]];
Betti[s_]:=Module[{G},If[GraphQ[s],G=Whitney[s],G=s];Map[nu,Hodge[G]]];
Fvector[A_]:=Delete[BinCounts[Map[L,A]],1];  Euler[A_]:=Sum[omega[A[[k]]],{k,L[A]}];
QR[A_]:=Module[{F,B,n,T},T=Transpose;F=T[A];n[x_]:=x/Sqrt[x.x];B={n[F[[1]]]};Do[v=F[[k]];
 u=v-Sum[(v.B[[j]])*B[[j]],{j,k-1}];B=Append[B,n[u]],{k,2,Length[F]}];{T[B],B.A}];
QRDeformation[B_,t_]:=Module[{Q,R,EE},{Q,R}=QR[MatrixExp[-t*B]]; Transpose[Q].B.Q];
PS={ColorFunctionScaling -> False};
DiracPlot[{B_,b_}]:=Module[{S1,S2,S3},n=L[B];S1=MatrixPlot[B,FrameTicks->None,PS,Frame->False];
S2=Graphics[{Thickness[0.001],Table[Line[{{0,n-b[[k]]},{n,n-b[[k]]}}],{k,L[b]}]}];
S3=Graphics[{Thickness[0.001],Table[Line[{{b[[k]],0},{b[[k]],n}}],{k,L[b]}]}];Show[{S1,S2,S3}]];
LowerT[A_]:=Table[If[i>=j,0,A[[i,j]]],{i,Length[A]},{j,Length[A[[1]]]}];
Str[B_,b_]:=Module[{},Sum[-(-1)^k*Sum[B[[b[[k]]+l,b[[k]]+l]],{l,b[[k+1]]-b[[k]]}],{k,L[b]-1}]];
RK[f_,x_,s_]:=Module[{u,v,w,q},u=s*f[x];v=s*f[x+u/2];w=s*f[x+v/2];q=s*f[x+w];x+(u+2v+2w+q)/6];
DiracDeformation[DD_,tt_]:=Module[{dt=1./10^5,d,e,B,q=DD,T}, NN=Floor[tt/dt];
   Do[d=LowerT[q];e=Transpose[d];B=e-d;FF[x_]:=B.x-x.B;q=RK[FF,1.*q,dt],{NN}];q];
ComplexDirac[DD_,tt_]:=Module[{dt=1./10^5,d,e,m,B,q=DD}, NN=Floor[tt/dt];Do[d=LowerT[q];
  e=Conjugate[Transpose[d]];m=q-d-e;B=e-d+I m;FF[x_]:=B.x-x.B;q=RK[FF,1.*q,dt],{NN}];q];
EvolvDirac[DD_]:=Module[{dt=1./10^5,d,e,m,B,q=DD,Q={}},Do[d=LowerT[q];e=Conjugate[Transpose[d]];
  m=q-d-e;B=e-d+0*I m;FF[x_]:=B.x-x.B;q=RK[FF,1.*q,dt];Q=Append[Q,Tr[m.m]],{Floor[2.0/dt]}];Q];

s=RandomGraph[{8,20}]; G=Whitney[s];{B,b}=Simplify[Dirac[G]]; n=Length[B]; 
V[X_]:=Max[Abs[Flatten[Table[X[[k,l]],{k,n},{l,n}]]]];
B2=DiracDeformation[B,0.12]; B1=QRDeformation[B,0.12]; Print["Difference: ",V[B2-B1]];
S=GraphicsGrid[{{DiracPlot[{Chop[B],b}],DiracPlot[{Chop[B1],b}]}}]

s=CompleteGraph[4]; G=Whitney[s]; {B,b}=Simplify[Dirac[G]];n=Length[B];A=EvolvDirac[B];
S=ListPlot[Table[Re[A[[k+1]]-A[[k]]],{k,Length[A]-1}],Filling->Bottom,FillingStyle->Yellow]
\end{lstlisting}
\end{tiny}

\begin{figure}[!htpb]
\scalebox{0.9}{\includegraphics{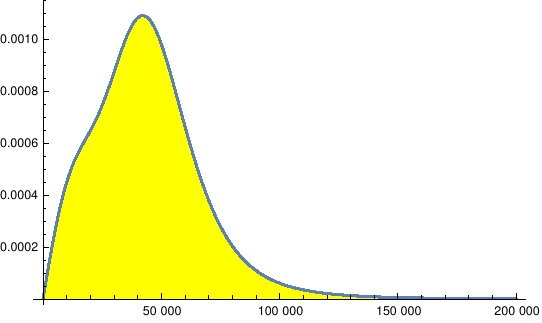}}
\label{inflation}
\caption{
The rate of change of the norm $||m(t)||$ of the block diagonal 
matrix $m(t) = \oplus_{k=0}^q m_k(t)$ acting on 
$l^2(G)=\oplus_{k=0}^q l_2(G_k)$ shows inflation. The graph looks  
pretty similar for different $G$. 
}
\end{figure}

\bibliographystyle{plain}

\begin{thebibliography}{10}

\bibitem{Alexandroff1937}
P.~Alexandroff.
\newblock Diskrete {R\"aume}.
\newblock {\em Mat. Sb. 2}, 2, 1937.

\bibitem{AndersonMorley1985}
W.N. Anderson and T.D. Morley.
\newblock Eigenvalues of the {L}aplacian of a graph.
\newblock {\em Linear and Multilinear Algebra}, 18(2):141--145, 1985.

\bibitem{Arnold1980}
V.I. Arnold.
\newblock {\em Mathematical {M}ethods of {C}lassical {M}echanics}.
\newblock Springer Verlag, New York, 2 edition, 1980.

\bibitem{Connes}
A.~Connes.
\newblock {\em Noncommutative geometry}.
\newblock Academic Press, 1994.

\bibitem{Cycon}
H.L. Cycon, R.G.Froese, W.Kirsch, and B.Simon.
\newblock {\em {Schr\"odinger} Operators---with Application to Quantum
  Mechanics and Global Geometry}.
\newblock Springer-Verlag, 1987.

\bibitem{SchroedingerSpaceTimeStructure}
{E. Schr\"odinger}.
\newblock {\em Space-Time-Structure}.
\newblock Cambridge University Press, 1950.

\bibitem{Flaschka1974}
H.~Flaschka.
\newblock On the toda lattice ii.
\newblock {\em Progress of Theoretical Physics}, 51:703--716, 1974.

\bibitem{Kni93b}
O.~Knill.
\newblock Factorisation of random {Jacobi} operators and {B\"acklund}
  transformations.
\newblock {\em Communications in Mathematical Physics}, 151:589--605, 1993.

\bibitem{Kni93a}
O.~Knill.
\newblock Isospectral deformations of random {Jacobi} operators.
\newblock {\em Communications in Mathematical Physics}, 151:403--426, 1993.

\bibitem{Kni95}
O.~Knill.
\newblock Renormalization of of random {Jacobi} operators.
\newblock {\em Communications in Mathematical Physics}, 164:195--215, 1995.

\bibitem{Kni98}
O.~Knill.
\newblock A remark on quantum dynamics.
\newblock {\em Helvetica Physica Acta}, 71:233--241, 1998.

\bibitem{knillmckeansinger}
O.~Knill.
\newblock {The McKean-Singer Formula in Graph Theory}.
\newblock {\\}http://arxiv.org/abs/1301.1408, 2012.

\bibitem{IsospectralDirac2}
O.~Knill.
\newblock An integrable evolution equation in geometry.
\newblock {{\\} http://arxiv.org/abs/1306.0060}, 2013.

\bibitem{IsospectralDirac}
O.~Knill.
\newblock Isospectral deformations of the {D}irac operator.
\newblock {{\\}http://arxiv.org/abs/1306.5597}, 2013.

\bibitem{Knill2024}
O.~Knill.
\newblock Eigenvalue bounds of the {Kirchhoff Laplacian}.
\newblock {\em Linear Algebra and its Applications}, 701:1--21, 2024.

\bibitem{Lax1968}
P.D. Lax.
\newblock Integrals of nonlinear ee.uations of evolution and solitary waves.
\newblock {\em Courant Institute of Mathematical Sciences AEC Report}, January
  1968.

\bibitem{McKeanSinger}
H.P. McKean and I.M. Singer.
\newblock Curvature and the eigenvalues of the {L}aplacian.
\newblock {\em J. Differential Geometry}, 1(1):43--69, 1967.

\bibitem{DeiftLiTomei}
P.Deift, L.C. Li, and C.~Tomei.
\newblock Toda flows with infinitely many variables.
\newblock {\em Journal of FUnctional Analysis}, 64:358--402, 1985.

\bibitem{Symes}
W.~Symes.
\newblock The {QR} algorithm and scattering for the finite nonperiodic toda
  lattice.
\newblock {\em Physica D}, 4:275--280, 1982.

\bibitem{Toda1967}
M.~Toda.
\newblock Vibration of a chain with nonlinear interaction.
\newblock {\em Journal of the Physical Society of Japan}, 22:431--436, 1967.

\bibitem{Toda}
M.~Toda.
\newblock {\em Theory of nonlinear lattices}.
\newblock Springer-Verlag, Berlin, 1981.

\bibitem{Witten1982}
E.~Witten.
\newblock Supersymmetry and {M}orse theory.
\newblock {\em J. of Diff. Geometry}, 17:661--692, 1982.

\end{thebibliography}

\end{document}